\newtheorem{prop}{Proposition}
\newtheorem{defi}{Definition}
\theoremstyle{definition}
\setlist{nolistsep}
\let\oldsqrt\sqrt
\def\sqrt{\mathpalette\DHLhksqrt}
\def\DHLhksqrt#1#2{%
\setbox0=\hbox{$#1\oldsqrt{#2\,}$}\dimen0=\ht0
\advance\dimen0-0.2\ht0
\setbox2=\hbox{\vrule height\ht0 depth -\dimen0}%
{\box0\lower0.4pt\box2}}
\definecolor{bblue}{HTML}{4F81BD}
\definecolor{rred}{HTML}{C0504D}
\definecolor{ggreen}{HTML}{9BBB59}
\definecolor{ppurple}{HTML}{9F4C7C}
\newcounter{megaalgorithm}
\newenvironment{megaalgorithm}[1][htb]
  {% Update algorithm name
   \let\c@algocf\c@megaalgorithm% Update algorithm counter
   \begin{algorithm}[#1]%
  }{\end{algorithm}}
\title{\LARGE \bf
Reduced Complexity Multi-Scale Path-Planning on Probabilistic Maps}
\author{Florian Hauer$^{1}$~~~ Panagiotis Tsiotras$^{2}$% <-this % stops a space
\thanks{$^{1}$PhD candidate, School of Aerospace Engineering, Georgia Institute of Technology, Atlanta, GA 30332-0150{, Email:\rm fhauer3@gatech.edu}}%
\thanks{$^{2}$Professor, School of Aerospace Engineering and Institute for Robotics and Intelligent Machines, Georgia Institute of Technology, Atlanta, GA 30332-0150{, Email:\rm tsiotras@gatech.edu}}%
\thanks{Support for this work has been provided by ARO MURI award W911NF-11-1-0046 and ONR award N00014-13-1-0563}%
}
\begin{document}

\maketitle
%\thispagestyle{plain}
%\pagestyle{plain}
%\thispagestyle{empty}
%\pagestyle{empty}

%%%%%%%%%%%%%%%%%%%%%%%%%%%%%%%%%%%%%%%%%%%%%%%%%%%%%%%%%%%%%%%%%%%%%%%%%%%%%%%%
\begin{abstract}
We present several modifications to the previously proposed MSPP algorithm that can speed-up its execution considerably.
The MSPP algorithm leverages a multiscale representation of the environment in $n$ dimensions.
The information of the environment is stored in a tree data structure representing a recursive dyadic partitioning of the search space. The information used by the algorithm is the probability that a node in the tree corresponds to an obstacle in the search space. Such trees are often created from mainstream perception algorithms, and correspond to quadtrees and octrees in two and three dimensions, respectively.
We first present a new method to compute the graph neighbors in order to reduce the complexity of each iteration, from $O(\vert V\vert^2)$ to $O(\vert V\vert \log \vert V\vert)$.
We then show how to delay expensive intermediate computations until we know that new information will be required, hence saving time by not operating on information that is never used during the search.
Finally, we present a way to remove the very expensive need to calculate a full multi-scale map with the use of sampling and derive an theoretical upperbound of the probability of failure as a function of the number of samples.
\end{abstract}

%%%%%%%%%%%%%%%%%%%%%%%%%%%%%%%%%%%%%%%%%%%%%%%%%%%%%%%%%%%%%%%%%%%%%%%%%%%%%%%%
\section{Introduction}

The path-planning problem of an agent moving in a $n$-dimensional environment full of obstacles is considered.
It is assumed, without loss of generality, that the search space is contained within an $n$-dimensional hypercube.
Each dimension of this hypercube is divided by successive dyadic partitioning.
This partitioning generates a tree data structure that contains
all the environmental data, organized hierarchically with increasing resolution at each successive depth.
In the 2-D and the 3-D cases this multiresolution data structure is commonly known as a quadtree and an octree, respectively. The information stored in each node of the tree is the probability that the volume of the search space corresponding to this node is occupied by an
% non-traversable
obstacle.
The problem is then to find an obstacle-free path between a start point and a goal point, or to report failure if no obstacle-free path exists.

Path-planning algorithms rely on perception algorithms to map the environment and localize the agent on the map. Commonly used data structures to represent perceived environments include multiresolution representations resulting from many common perception algorithms~\cite{endres3,octomap2010}.
%The success of any path-planning algorithm depends on the perception algorithm used to localize the agent and create the map of the environment, including the obstacles. The multiresolution representation of the environment in this paper is chosen because it utilizes a common data structure resulting from many perception algorithms used in practice~\cite{endres3,octomap2010}.
%~\cite{molinos2014perception,dornhege2013coverage,endres3,octomap2010}.
The use of hierarchical multiresolution data structures is motivated by several observations:
First, information collected about the environment is not uniform, as each sensor has its own range, resolution and noise properties.
The information used by perception algorithms is then naturally multiscale; estimation and inference is often used
to extract the best information out of noisy measurements, leading to a probabilistic representation of the environment~\cite{octomap2010}.
Second, on-board computational resources might be limited, thus not allowing an agent to systematically use all perceived information. Furthermore, precise information about the environment far away from the agent might not be valid, or may even be irrelevant,
if the robot is far away from the obstacle.
A multiscale representation of the collected data allows to choose the resolution for each region of the space as needed.
For planning purposes, for instance, local information is typically important over the short-term (e.g., obstacle avoidance),
while far away information affects only long-term objectives such as reaching a goal or exploring the environment.

%Specifically, the idea to hierarchically organize the collected data about the environment is motivated by the following two observations:
%First, processing all collected data at the finest resolution may be computationally prohibitive, especially for small robotic platforms with limited on-board computational resources. In addition, processing all collected information at the same temporal and spatial scale may not even be necessary for path-planning purposes, as path feasibility is primarily determined by the obstacles in the vicinity of the vehicle; far away obstacles, on the other hand, affect longer-term objectives, such as exploring the environment, reaching the goal state, etc.
%Second, this multiscale hierarchy is often brought about by the perception system itself. Indeed, depending on the sensors, the collected information about the environment is rarely uniformly accurately known, and collected information is often represented by probability values modeling measurement uncertainty~\cite{octomap2010}.

Several approaches have been used in the past to incorporate multiscale information during  planning.
Bottom-up approaches use the information at the finest resolution and then combine it in increasingly coarser  resolutions. 
Top-down approaches solve the path-planning problem at the coarser resolution and then progressively increase the resolution of the solution~\cite{kambhampati1986multiresolution,pai1998multiresolution}.
Using both approaches can lead to fast optimal algorithms, as shown in~\cite{lu2010beamlet}.
But the preprocessing of the data required by this approach is too expensive (in terms of time and memory) for online applications.
%Top-down approaches consist of finding a path at the coarsest resolution level and subsequently progressively increasing its resolution \cite{kambhampati1986multiresolution,pai1998multiresolution}. Bottom-up approaches solve the problem at each node at the finest resolution level and combine the results at different resolution levels. This approach yields optimality, but requires knowing and processing the entire map at the finest resolution~\cite{lu2010beamlet}, which may be computationally very expensive.
Another approach consists in using the information at different resolutions at the same time. This idea is explored in \cite{cowlagi2011hierarchical},
where areas near the current vehicle are represented accurately, while farther-away areas are coarsely-encoded
%where accurate information is kept around the current position of the vehicle and coarser information farther away
 by using a transformation on the wavelet coefficients. The approach is shown to be complete and very fast.
%\alert{there are any more multi-resolution path-planning algorithms in the AI community. can you please include a more thorough lit review?}

A similar approach is used to create a local map in~\cite{behnke2004local}, but quadtrees are used instead of wavelets and only local planning is done. Other algorithms have been developed using multi-resolution maps, but they are often applied to a \textit{given} non-uniform grid, without using the information at different resolution scales for the same region of the search space~\cite{ferguson2006using,petres2007path}.
More recently, the MSPP algorithm \cite{hauer2015multi} extended the work of \cite{cowlagi2011hierarchical} to $n$ dimensions in a reformulation using $2^n$ trees instead of wavelets.
The notion of $\varepsilon$-obstacles guaranteeing completeness for any value of the threshold $\varepsilon$ was also introduced in the same paper.

In this paper, we propose several modifications to the MSPP algorithm introduced in \cite{hauer2015multi} to accelerate computations and extend the range of potential applications.
Neighbor checking, a bottleneck operation of the MSPP algorithm, is reworked to reduce its complexity from $O(\vert V\vert^2)$ to $O(\vert V\vert \log \vert V\vert)$.
The order in which operations are executed is also modified in order to minimize the computations.
The full reduced graph is not computed, only its nodes are computed;
the edges are calculated on-the-fly, which saves computations; since most edges are never used by the search algorithm.
Finally, we introduce a way to work without a multi-scale map, but  instead we use a predicate to determine whether a point of the search space is an obstacle or not.
This modification allows us to remove the expensive map creation step.
It also saves memory, since the full multi-scale map does not need to be known in advance. The trade-off here is giving up completeness for the sake of increased runtime performance and memory reduction, as well as an extension of the range of potential application where accurate information about the environment is not known and is collected incrementally via sampling.

%%%%%%%%%%%%%%%%%%%%%%%%%%%%%%%%%%%%%%%%%%%%%%%%%%%%%%%%%%%%%%%%%%%%%%%%%%%%%%%%

\section{Notation and Previous Work} \label{sec:notations}

In this section, we briefly introduce the notation used in the rest of the paper.  We refer the reader to \cite{hauer2015multi} for more details on the original MSPP algorithm and the notation used throughout the rest of the paper.

\subsection{Multiresolution World Representation}

The environment $\mathcal{W} \subset \mathbb{R}^d$ is assumed to be a $d$-dimensional grid world.
Without loss of generality, we assume that each elementary cell of the grid world is a unit hypercube and there exists an integer $\ell>0$ such that the world is contained within a hypercube of side length $2^{\ell}$.
The world $\mathcal{W}$ is encoded  as a tree $\mathcal{T}=(\mathcal{N},\mathcal{R})$ representing the multi-scale information, with $\mathcal{N}$ being the set of nodes and $\mathcal{R}$ being the set of edges describing their relations.
Nodes of $\mathcal{T}$ are represented by two indices, $k$ and $p$, corresponding to the depth of the node $n_{k,p}$ in the tree and the position of the center of the node in the search space.
A node $n_{k,p}$ represents a hypercube in the search space $\mathcal{W}$ centered at $p$ and of size $2^{k}$, and is denoted by $H(n_{k,p})$.
The \textit{children} of the node $n_{k,p}$ are denoted by $n_{k-1,q_i}, i\in[1,2^d]$ where $q_i=p+2^{k- 2}e_i$ and where $e_i$ is each of the $2^d$ ($d$-dimensional) vectors generated by $[\pm 1,\pm 1, \dots, \pm 1]$.
A node is called a \textit{leaf} of $\mathcal{T}$ if it has no children.

The information $V(n_{k,p})$ contained in each node $n_{k,p}\in \mathcal{N}$ is the probability of the existence of an obstacle in the space represented by the node, computed by
\begin{equation}
V(n_{k,p})=\frac{\text{Volume of obstacles in $H(n_{k,p})$}}{\text{Volume of $H(n_{k,p})$}}.
\end{equation}

\subsection{The Path-Planning Problem}

Two nodes in $\mathcal{T}$ are \textit{neighbors}  if their corresponding hypercubes share a hyperface, specifically, their
intersection is a hypercube of dimension $d-1$. % and one hypercube does not contain the other.
A necessary and sufficient condition for two  nodes $n_{k_1,p_1}$ and $n_{k_2,p_2}$ to be neighbors is that both of the following two conditions are satisfied:

\begin{itemize}
\item The expression
\begin{equation}
\label{nc1}
\|p_1-p_2\|_\infty = 2^{k_1-1}+2^{k_2-1}
\end{equation}
holds,
\item
There exists a unique $i\in[1,d]$, such that
\begin{equation}
\label{nc2}
\vert(p_1-p_2)_i\vert=2^{k_1-1}+2^{k_2-1},
\end{equation}
where $(p_1-p_2)_i$ is the $i^{th}$ component of the vector.
\end{itemize}

In the case of a 2-D or a 3-D environment with a uniform grid, conditions (\ref{nc1}) and (\ref{nc2}) imply 4-connectivity or 6-connectivity, respectively.

We define a \textit{path} $\pi=(n_{k_1,p_1},n_{k_2,p_2},\dots,n_{k_N,p_N})$ in $\mathcal{T}$ to be a sequence of nodes $n_{k_i,p_i} \in \mathcal{N}$, each at corresponding  position $p_i$ and depth $\ell-k_i$, such that two consecutive nodes of the sequence are neighbors.
A path is called a \textit{finest information path} (FIP) if all its nodes are leafs of $\mathcal{T}$.
Leaf nodes represent the best resolution, and hence the finer information contained in the tree $\mathcal{T}$.

Due to noisy measurements, and to prevent overconfidence and numerical issues, perception algorithms will not allow $V(n_{k,p})$ to reach 100\%. We therefore introduce the notion of $\varepsilon$-obstacles.
%will not allow a node to have a 100\% probability of being an obstacle.
Specifically, given $\varepsilon \in [0,1) $, a node $n_{k,p}\in \mathcal{N}$ is an \textit{$\varepsilon$-obstacle} if
\begin{equation}
\label{obstacle_def}
V(n_{k,p})\geq 1-2^{-d k}\varepsilon.
\end{equation}
A path $\pi$ is \textit{$\varepsilon$-feasible} if none of its nodes are $\varepsilon$-obstacles.

Given the representation of $\mathcal{W}$ encoded in the tree $\mathcal{T}$,
the problem is to find an $\varepsilon$-feasible FIP between two nodes in the tree, $n_{\rm{start}}$, representing the starting node, and $n_{\rm{goal}}$, representing the goal node, and to report failure if no such path exists.

\subsection{The MSPP Algorithm}

The MSPP algorithm is a backtracking algorithm that iteratively builds a solution from the starting node $n_{\rm{start}}$ until the goal node $n_{\rm{goal}}$ is reached.
At each iteration $i$, a local representation $\mathcal{G}_i$ of the environment, called the \textit{reduced graph}, is computed and the best path to the goal on this graph is used as a heuristic to decide which direction to follow.
The current candidate solution built by the algorithm at iteration $i$ is denoted by $\pi_{\rm{start}}^i$.
Therefore, $\pi_{\rm{start}}^i$ is an $\varepsilon$-feasible FIP from $n_{\rm{start}}$ to $n_{k_i,p_i}$, the node reached by the algorithm at iteration $i$.
The best path from $n_{k_i,p_i}$ to the goal on the reduced graph $\mathcal{G}_i$ is denoted by $\pi_i^{\rm{goal}}$.
The reduced graph
$\mathcal{G}_i$ is computed by first identifying its vertices (corresponding to nodes of $\mathcal{T}$) via a top-down exploration of $\mathcal{T}$.
%Finding the path $\pi^{\rm{goal}}_i$ on the reduced graph is using with the A$^*$ algorithm which has complexity $O(\vert V\vert \log \vert V\vert)$.
The first element of $\pi^{\rm{goal}}_i$ is used to build the global solution $\pi_{\rm{start}}^i$. If the path $\pi^{\rm{goal}}_i$ does not exist, the algorithm backtracks.

The main lines of the MSPP algorithm are shown in Algorithm~\ref{algo}. Refer to \cite{hauer2015multi} for full details of the MSPP algorithm.

\begin{algorithm}[ht]
 \KwData{Tree $\mathcal{T}$, Start node $n_{{\rm start}}$, Goal node $n_{{\rm goal}}$}
 \KwResult{$\varepsilon$-feasible FIP from $n_{{\rm start}}$ to $n_{{\rm goal}}$ or failure}
 $i\gets 0,n_{k_i,p_i}\gets n_{{\rm start}}$,$\pi_{{\rm start}}^0\gets [n_{k_i,p_i}]$\;
 \While{Goal not found AND no failure}{
  $(\tilde{\mathcal{G}}_i,v_{{\rm start},i},v_{{\rm goal},i})\gets$ReducedGraph($T$,$n_{k_i,p_i}$)\;
  $\pi_i^{{\rm goal}} \gets $ShortestPath($\tilde{\mathcal{G}}_i,v_{{\rm start},i},v_{\rm{goal},i}$)\;
  \eIf{exists($\pi_i^{{\rm goal}}$)}{
   $n_{k_{i+1},p_{i+1}}\gets $firstElement$(\pi_i^{{\rm goal}})$\;
   $\pi_{\rm{start}}^{i+1} \gets [\pi_{{\rm start}}^i \quad n_{k_{i+1},p_{i+1}}]$\;
   }{
   	backtrack\;
  }
  $i\gets i+1$\;
 }
 \caption{The MSPP Algorithm - Simplified for clarity}
 \label{algo}
\end{algorithm}

\section{Fast Neighbor Computation - MSPP-FN}

In the original MSPP algorithm~\cite{hauer2015multi}, neighbors for the reduced graph are computed by testing whether every pair of vertices satisfies the neighborhood properties.
As shown in \cite{hauer2015multi}, this step is the bottleneck during each iteration, having complexity $O(\vert V\vert ^2)$,
where $\vert V\vert$ is the number of vertices.
A new way to compute neighbors by reducing the complexity from $O(\vert V\vert^2)$ to $O(\vert V\vert\log \vert V\vert)$ proceeds as follows; for each vertex, neighbor candidates are generated, and the existence of these candidate vertices of $\mathcal{G}_i$ is verified. The details are given next. MSPP-FN will be used in the rest of the paper to refer to the MSPP algorithm using the fast neighbor computation.

\subsection{Tree Data Structure For Vertices}

In order to do fast searches over the vertices of the reduced graph, we keep them in a tree structure.
Let $\mathcal{T}_i$ define this tree structure.
As the original tree $\mathcal{T}$ is traversed to select nodes for $\mathcal{G}_i$, $\mathcal{T}_i$ is constructed by copying every element of $\mathcal{T}$ traversed by the selection process, except for $\varepsilon$-obstacles.
$\mathcal{T}_i$ is then a tree with the same structure as $\mathcal{T}$, but its branches are shorter.
% and ends at nodes corresponding to vertices of $\mathcal{G}_i$.
In other words, $\mathcal{T}_i$ is a pruned version of $\mathcal{T}$ whose leave nodes are the vertices of $\mathcal{G}_i$.

Note that, for implementation, $\mathcal{T}_i$ does not change significantly between two consecutive iterations, so it is computationally cheaper to modify $\mathcal{T}_{i-1}$ than to create a new data structure at each iteration.
Memory allocation is the most expensive operation when creating new nodes.
Modifying $\mathcal{T}_{i-1}$ allows to only have to allocate memory for nodes of $\mathcal{T}_{i}$ that did not exist in $\mathcal{T}_{i-1}$.
The copy process is then modified to add nodes only if they do not already exist, and remove excessive nodes when reaching a node corresponding to a vertex of $\mathcal{G}_i$.
The pseudo-code is given in Function~\ref{algo:grgv2}.
The vertex list is also removed since the information is already contained in $\mathcal{T}_i$.
The function {\tt GetRGFastNeighbor} is called with the root of $\mathcal{T}$, the root of $\mathcal{T}_i$ (created during initialization) and the current node $n_{k_i,p_i}$.

\begin{megaalgorithm}[ht]
 \KwData{Node $n_{k,p}$ (in $\mathcal{T}$), Node $t_{k,p}$ (in $\mathcal{T}_i$), Current node $n_{k_i,p_i}$}
 \eIf{($\|p-p_i\|_2-\frac{\sqrt{d}}{2}2^{k_i} \geq \alpha 2^{k}$ OR isLeaf($n_{k,p}$)) AND doesNotContainPath($n_{k,p}$)}{
 	\eIf{$n_{k,p}$ is not a $\varepsilon$-obstacle}{
 		Remove all descendants of $t_{k,p}$\;
 	}{
		Remove $t_{k,p}$ and its descendants\; 	
 	}
 }{
 	\ForEach{$(m,q)$ index of children of $(k,p)$}{
 		\If{$t_{m,q}$ does not exist}{Create $t_{m,q}$ child of $t_{k,p}$\;}
 		\tt{GetRGFastNeighbor}($n_{m,q}$,$t_{m,q}$,$n_{k_i,p_i}$)\;
 	}
 }
\caption{\tt{GetRGFastNeighbor}()}
\label{algo:grgv2}
\end{megaalgorithm}

%FIGURE SHOWING DIFFERENCE BETWEEN TI AND T

\subsection{Same Size Neighbors}

Generating neighbors is easy when the nodes have the same size, so we will consider this case first. Given a node $n_{k,p}$, we want to find all its neighbors
having the same size that correspond to vertices of $\mathcal{G}_i$.
\begin{figure}[ht]
\centering
	\newcommand{\drawChildren}[2]{\draw[treenodes] #1 rectangle ($#1+(#2,#2)$);
\draw[treenodes] #1 rectangle ($#1+(-#2,#2)$);
\draw[treenodes] #1 rectangle ($#1+(-#2,-#2)$);
\draw[treenodes] #1 rectangle ($#1+(#2,-#2)$);}

\begin{tikzpicture}[scale=\textwidth/28cm]
\tikzstyle{treenodes}=[black,thick]

\draw[black,thick] (-3,-1) rectangle (-1,1);
\draw[black,thick] (-1,-3) rectangle (1,-1);
\draw[black,thick] (-1,1) rectangle (1,3);
\draw[black,thick] (1,-1) rectangle (3,1);

\draw[black,thick] (-1,-1) rectangle (1,1);

\node[draw,circle,inner sep=1pt,fill] (nkp) at (0,0){};
\node[anchor=north west] at (nkp.south east) {$p$};

\node[draw,circle,inner sep=1pt,fill] (n1) at (2,0){};
\node[anchor=north] at (n1.south) {$p_{\rm{nhb},1}$};
\node[draw,circle,inner sep=1pt,fill] (n2) at (0,2){};
\node[anchor=south] at (n2.north) {$p_{\rm{nhb},2}$};
\node[draw,circle,inner sep=1pt,fill] (n3) at (-2,0){};
\node[anchor=north] at (n3.south) {$p_{\rm{nhb},3}$};
\node[draw,circle,inner sep=1pt,fill] (n4) at (0,-2){};
\node[anchor=north] at (n4.south) {$p_{\rm{nhb},4}$};

\path[draw,dashed,->,red] (nkp) -- (n1);
\path[draw,dashed,->,red] (nkp) -- (n2);
\path[draw,dashed,->,red] (nkp) -- (n3);
\path[draw,dashed,->,red] (nkp) -- (n4);

\node[] (d1) at (-2,-3){};
\node[] (d2) at (-3,-2){};
\node[] (o) at (-3,-3){};
\node[anchor=west] at (d1.east) {$d_1$};
\node[anchor=south] at (d2.north) {$d_2$};
\path[draw,->] (o.center) -- (d1);
\path[draw,->] (o.center) -- (d2);

\end{tikzpicture}
	\caption{Generating neighbors for the construction of $\mathcal{G}_i$. Same size neighbors case: $H(n_{k,p})$ is the center square and the $p_{\mathrm{nhb},i}$ are the generated position candidates for the neighbors.}
	\label{same_size}
\end{figure}
Same size implies the same depth in $\mathcal{T}$, so every neighbor will have the same depth index $k$.
Also, the neighbor conditions and the fact that the nodes are centered on a grid, imply that only one dimension of the position vector can be changed at a time, that is, the neighbors' positions $p_{\mathrm{nhb},i}$ can only be
\[ p_{\mathrm{nhb},i}=p+2^k b_i, 1\leq i \leq 2d\]
with
\[ b_i=\begin{cases}
                        d_i &\text{ if }i\leq d, \\
                        -d_i &\text{ otherwise,}
                    \end{cases} \]
where $d_i$ is the $i^{th}$ vector of the standard basis of $\mathbb{R}^d$.
If $p_{\mathrm{nhb},i}$ is within the bounds of the search space, $n_{k,p_{\mathrm{nhb},i}}$ is in $\mathcal{T}_i$ and $n_{k,p_{\mathrm{nhb},i}}$ is a leaf of $\mathcal{T}_i$, then $n_{k,p_{\mathrm{nhb},i}}$ is a valid neighbor of $n_{k,p}$.
Figure~\ref{same_size} shows in the center the hypercube corresponding to $n_{k,p}$ and the neighbor candidates around it. The red arrows represent the vectors $2^k b_i$.

Searching the tree $\mathcal{T}_i$ can be done on average in $O(\log \vert V \vert)$, and the number of candidates to check is $2d$.

\subsection{Larger Neighbors}

Consider now the case of finding the larger neighbors of $n_{k,p}$. The previous result can still be used, but it will generate points inside larger neighbors instead of their positions.
\begin{figure}[ht]
\centering
	\newcommand{\drawChildren}[2]{\draw[treenodes] #1 rectangle ($#1+(#2,#2)$);
\draw[treenodes] #1 rectangle ($#1+(-#2,#2)$);
\draw[treenodes] #1 rectangle ($#1+(-#2,-#2)$);
\draw[treenodes] #1 rectangle ($#1+(#2,-#2)$);}

\begin{tikzpicture}[scale=\textwidth/50cm]
\tikzstyle{treenodes}=[black,thick]

\draw[black,thick] (-3,-1) rectangle (-1,1);
\draw[black,thick] (-1,-3) rectangle (1,-1);
\draw[black,thick] (-1,1) rectangle (1,3);
\draw[black,thick,dashed] (1,-1) rectangle (3,1);
\draw[black,thick,dashed] (1,-1) rectangle (5,3);
\draw[white,thick] (1,-5) rectangle (9,3);
\draw[black,thick,dashed] (1,-5) rectangle (9,3);
\draw[white,thick] (-7,-5) rectangle (9,11);
\draw[black,thick,dashed] (-7,-5) rectangle (9,11);

\draw[black,thick] (-1,-1) rectangle (1,1);

\node[draw,circle,inner sep=1pt,fill] (nkp) at (0,0){};
\node[anchor=east] at (nkp.west) {\small{$p$}};

\node[draw,circle,inner sep=1pt,fill] (n1) at (2,0){};
\node[anchor=west] at (n1.east) {\small{$p_{\rm{nhb},1}$}};

\path[draw,dashed,->,red] (nkp) -- (n1);

\node[] (d1) at (-4,-4){$d_1$};
\node[] (d2) at (-6,-2){$d_2$};
\node[] (o) at (-6,-4){};
\path[draw,->] (o.center) -- (d1);
\path[draw,->] (o.center) -- (d2);

\end{tikzpicture}
	\caption{Generating neighbors for the construction of $\mathcal{G}_i$. Larger neighbors case: $H(n_{k,p})$ is the smaller square  around $p$ and $p_{\mathrm{nhb},1}$ is the first generated position candidate for the neighbors. The dashed squares represents the hypercubes corresponding to nodes visited during the search for $p_{\mathrm{nhb},1}$ in the tree.}
	\label{larger}
\end{figure}
The search through the tree works as follows.
It starts with the root of the tree, representing the entire environment, as the current node. As long as the current node has children (recall that our data structure assumes that they either all exist or none of them exists), the child whose hybercube contains the searched point $p_{\mathrm{nhb},i}$ is selected as the current node. That is, at each step, the search process selects the node at the next level of resolution whose hypercube contains $p_{\mathrm{nhb},i}$. The search stops if either the current node is at $p_{\mathrm{nhb},i}$ or the current node does not have children. At the end of the process, the current node is a neighbor of $n_{k,p}$ and if it is a leaf, then it is also a vertex of $\mathcal{G}_i$.
A larger node could contain $p$ and then not be a neighbor, but since $n_{k,p}$ exists, that node would have children and the search process will never stop in such a situation.

Figure~\ref{larger} shows, in dashed lines, the last four nodes that would be explored while searching for $p_{\rm{nhb},1}$. If $n_{k,p_{\rm{nhb},1}}$ does not exist, the algorithm will stop at one of its ancestors, which will be a neighbor of $n_{k,p}$. Note that the search cannot stop at the largest ancestor shown, since it contains $n_{k,p}$, so all the children exist.

\subsection{Smaller Neighbors}

The last case to consider is when there are smaller neighbors of $n_{k,p}$. The search for $p_{\mathrm{nhb},i}$ in $\mathcal{T}_i$ will return a node that is not a leaf.
\begin{figure}[ht]
\centering
	\newcommand{\drawChildren}[2]{\draw[treenodes] #1 rectangle ($#1+(#2,#2)$);
\draw[treenodes] #1 rectangle ($#1+(-#2,#2)$);
\draw[treenodes] #1 rectangle ($#1+(-#2,-#2)$);
\draw[treenodes] #1 rectangle ($#1+(#2,-#2)$);}

\begin{tikzpicture}[scale=\textwidth/10cm]
\tikzstyle{treenodes}=[black,thick,dashed]

\drawChildren{(1.25,0.25)}{0.25};
\drawChildren{(1.5,0.5)}{0.5};
\drawChildren{(1.5,-0.5)}{0.5};
\draw[white,thick] (1,0) rectangle (2,1);
\draw[black,thick,dashed] (1,0) rectangle (2,1);
\draw[white,thick] (1,-1) rectangle (2,0);
\draw[black,thick,dashed] (1,-1) rectangle (2,0);
\draw[white,thick] (1,-1) rectangle (3,1);
\draw[black,thick,dashed] (1,-1) rectangle (3,1);

\draw[black,thick] (-1,-1) rectangle (1,1);

\draw[blue,thick] (1,0.5) rectangle (1.5,1);
\draw[blue,thick] (1,0.25) rectangle (1.25,0.5);
\draw[blue,thick] (1,0) rectangle (1.25,0.25);
\draw[blue,thick] (1,-0.5) rectangle (1.5,0);
\draw[blue,thick] (1,-1) rectangle (1.5,-0.5);

\node[draw,circle,inner sep=1pt,fill] (nkp) at (0,0){};
\node[anchor=east] at (nkp.west) {$p$};

\node[draw,circle,inner sep=1pt,fill] (n1) at (2,0){};
\node[anchor=west] at (n1.east) {$p_{\rm{nhb},1}$};

\path[draw,dashed,->,red] (nkp) -- (n1);

\node[] (d1) at (-0.4,-0.8){$d_1$};
\node[] (d2) at (-0.8,-0.4){$d_2$};
\node[] (o) at (-0.8,-0.8){};
\path[draw,->] (o.center) -- (d1);
\path[draw,->] (o.center) -- (d2);

\end{tikzpicture}
	\caption{Generating neighbors for the construction of $\mathcal{G}_i$.  Smaller neighbors case: $H(n_{k,p})$ is the left square and $p_{\mathrm{nhb},1}$ is the first generated position candidate for the neighbors. The larger dashed square is $H(n_{k,p_{\mathrm{nhb},1}})$ and the blue squares correspond to the descendant of $n_{k,p_{\mathrm{nhb},1}}$ that are neighbors of $n_{k,p}$.}
	\label{smaller}
\end{figure}
In this case, the exploration of children of $p_{\mathrm{nhb},i}$ can lead to the neighbors. Note that $p_{\mathrm{nhb},i}$ was generated by moving in the direction $b_i$, but since the neighbors are smaller, the move was too large, and hence neighbors of $n_{k,p}$ are leaf nodes, descendant of $p_{\rm{nhb},i}$ in the direction $-b_i$.

Figure~\ref{smaller} shows what happens for smaller neighbors. The larger dashed square is the hypercube corresponding to the candidate neighbor $p_{\rm{nhb},1}$, but that node is not a leaf, so it is not a vertex of $\mathcal{G}_i$. Exploring its children (until leaf nodes) in the direction $-b_1=-d_1$, will lead to all its descendants that are neighbor with $n_{k,p}$, and in $\mathcal{G}_i$, since they will be leaf nodes. The neighbors are drawn in blue in Figure~\ref{smaller}.

\subsection{Computing All Neighbors in \texorpdfstring{$\mathcal{G}_i$}{Gi}} \label{sec:allneigh}

When looking for all neighbors, nodes can be treated in any order, in particular, from smallest to largest. For the smallest nodes, all neighbors are larger. If all smaller nodes have been treated before, for a given node $n_{k,p}$, the smaller neighbors will already have been found and the only information missing is the larger nodes.
All neighbor pairs can then be found by looking for larger neighbors for each node ordered from the smallest to the largest.
Finding larger and same size neighbors is done in  $O(\log \vert V \vert)$ for each of the $\vert V \vert$ nodes, so finding all neighboring pairs in $\mathcal{G}_i$ is then be done in $O(\vert V \vert \log \vert V \vert)$.

\subsection{Computing All Neighbors of a Given Node \texorpdfstring{$n_{k,p}$}{nkp}}

Finding all neighbors of a given node $n_{k,p}$ can be done using the pseudo-code in Function~\ref{algo:findNeighbor}.
For each direction $b_i$, we compute the candidate neighbor position $p_{\mathrm{nhb},i}$ and search in $\mathcal{T}_i$ for the corresponding node.
If the node is a leaf, it means that a larger or same size neighbor has been found;
otherwise, the leaf descendants, in the direction $-b_i$, of the node found are smaller size neighbors.

\begin{megaalgorithm}[ht]
 \KwData{Node $n_{k,p}$}
 $neighbors$=$\emptyset$\;
 \ForEach{$i$ in $[1,2d]$}{
 		$n_{m,q}$=find($\mathcal{T}_i$,$p_{\rm{nhb},i}$)\;
 		\eIf{isLeaf($n_{m,q}$)}
 		{$neighbors$=$neighbors \cup n_{m,q}$\;}
 		{{\tt addLeafInDir}($n_{m,q}$, -$b_i$, $neighbors$)\;}
 	}
 	\Return $neighbors$\;

 \caption{\tt findNeighbors()}
\label{algo:findNeighbor}
\end{megaalgorithm}

\begin{megaalgorithm}[ht]
 \KwData{Node $n_{k,p}$, Direction $b$, List $neighbors$}
 \ForEach{$i$ in $[1,2^d]$}{
 	\If{$b^Te_i > 0$}
 	{$n$=child($n_{k,p}$,$i$)\;
 	\eIf{isLeaf($n$)}
 	{$neighbors$=$neighbors \cup n$\;}
 	{{ \tt addLeafInDir}($n$, $b$, $neighbors$)\;}}
 	}

 \caption{\tt addLeafInDir()}
\label{algo:addLeafInDir}
\end{megaalgorithm}

\section{Multi-Scale Path Planning Without Full Information Map - MSPP-S}

Although in 2D or 3D geometric workspaces, the multi-scale map is often the result of perception algorithms, this is not always the case.
When the search space is the configuration space and it is different from the geometric workspace, computing the multi-scale map might be very expensive, as it requires to analyze every single cell of the map.
Furthermore, in some cases, we may only have access to a predicate of whether a point of the search space is an obstacle or not.
A robotic arm, for example, is usually parameterized by the position of each joint; given a configuration, the spatial position of each link can be computed, and self-collision or collision with obstacles is checked in the geometric workspace.
It is assumed in this section that we have such a predicate, say $isObstacle(s)$, that informs us if a point $s$ of the search space is an obstacle.

In the proposed approach, sampling is used to estimate the obstacle probabilities of the nodes in $\mathcal{G}_i$.
Since we are using an estimate instead of the exact node probability values, completeness of the algorithm is not ensured.
Note, however, that if a large enough number of samples is drawn, the estimated probabilities will be close to their actual values,
and loss of completeness is very unlikely.

Similarly to the original MSPP algorithm, the proposed algorithm, MSPP-S (as MSPP with sampling), decomposes the space using a grid, which
has fine resolution near the current position, and the resolution becomes increasingly coarser farther away.
An empty tree data structure $\mathcal{T}_i$ is created to represent that grid.
It is empty in the sense that it does not have any information about the obstacles, it is a pure geometric partition of the search space.
For each node $n_{k,p}$ of the partition, the predicate can be used for a given number $N_{\mathrm{samples}}$ of random points drawn in the search space corresponding to the node.
An estimate of the probability of obstacles can then be calculated from those results and used to fill up the tree $\mathcal{T}_i$.
The value of the node is approximated by
\[
 \hat{V}(n_{k,p})=\frac{\text{Number of obstacles sampled}}{N_{\rm{samples}}}.
 \]
Similarly to the data structure used for neighbor checking, it is less costly to modify the data structure from the previous iteration than to recreate a new data-structure at each iteration.
Moreover, in that case, some information will already exist in the data structure, and sampling only needs to be done for the newly added nodes. %Using the idea of the minimal RG calculation, sampling can be delayed until the node is explored by the A$^*$ algorithm, which only happens if the node is of interest fort he current apth search.

Note that the data structure created is similar to the one created for neighbor checking, hence the same notation $\mathcal{T}_i$.
% It is a representation of $\mathcal{G}_i$ in a multi-scale manner, except that it contains estimated probabilities instead of the real values.
Since only the structure matters for neighbor checking, and not the actual values, this data structure can also be used for the neighbor checking step.

The pseudo-code for the vertex selection is given in Function~\ref{algo:grgvws} and the edges can be computed as described in Section~\ref{sec:allneigh}.

\begin{megaalgorithm}[ht]
 \KwData{Node $t_{k,p}$ (in $\mathcal{T}_i$), Current node $n_{k_i,p_i}$}
 \eIf{($\|p-p_i\|_2-\frac{\sqrt{d}}{2}2^{k_i} \geq \alpha 2^{k}$ AND doesNotContainPath($t_{k,p}$)}{
 	Remove all descendants of $t_{k,p}$\;
 	\If{$n_{k,p}$ has not been sampled yet}{
 		Sample $N_{\rm{samples}}$ in $H(n_{k,p})$\;
 		Estimate $\hat{V}(n_{k,p})$\;
 	}
 }{
 	\ForEach{$(m,q)$ index of children of $(k,p)$}{
 		\If{$t_{m,q}$ does not exist}{Create $t_{m,q}$ child of $t_{k,p}$\;}
 		{ \tt \small GetRGVerticesWithSampling}($t_{m,q}$,$n_{k_i,p_i}$)\;
 	}
 }

 \caption{\tt GetRGVerticesWithSampling()}
\label{algo:grgvws}
\end{megaalgorithm}

\section{Minimal Reduced Graph Construction}

Constructing $\mathcal{G}_i$ can be costly and only part of the information might be used at each iteration to solve for the shortest path.
In this work, it is assumed that the planning problem on $\mathcal{G}_i$ is solved using the A$^*$ algorithm although this is not restrictive.
% Pseudo-code for the A$^*$ algorithm is given in Algorithm \ref{algo:astar}.
In the A$^*$ algorithm, nodes are kept in a priority queue, called $OPEN$, ordered by $f$-values, where $f=g+h$ with $g$ the cost-to-go and $h$ an admissible heuristic to the goal. While $OPEN$ has elements, the first element is removed and for each of the neighbors, if they have not been closed yet, the $g$-value is updated, and it is added to the $OPEN$ priority queue. The algorithm stops when the first element of the $OPEN$ priority queue is the goal.

In the A$^*$ algorithm, knowing the neighbors of a node is only useful when that node is taken out of the $OPEN$ queue.
Similarly, the obstacle probability is only needed to calculate the {$g$-value} of a node.

By delaying those calculations until the necessary information is required, improvement in execution speed is expected. The following changes allow to save computations in the new algorithm:

\begin{itemize}
\item the {\tt ReducedGraph} function only computes the nodes of the reduced graph
\item during the $A^*$ algorithm, neighbors of a node are computed when the node is selected from the OPEN priority queue to be explored. If sampling is being used, sampling is only made the first time the $g$-value is calculated.
\end{itemize}

At the end, the algorithm will only have calculated the neighbors for the nodes in the $CLOSE$ list, and estimated the obstacle probability for nodes in $OPEN \cup CLOSE$.
In the worst case, the A$^*$ algorithm will explore every vertex and every edge, so all neighbors will be calculated and all nodes will be sampled similarly to the na\"{\i}ve case. But, in general, the number of neighbors calculated and the number of node sampled will be largely reduced compared to the na\"{\i}ve case.
This is confirmed by the numerical examples in the next section.

\section{Probabilistic bounds of MSPP-S}
As the estimate $\hat{V}(n_{k,p})$ is used instead of the actual obstacle probability $V(n_{k,p})$, a bad estimate could lead to missing solutions and losing completeness of the algorithm. In particular, if $\hat{V}(n_{k,p})$ overestimates $V(n_{k,p})$, the node $n_{k,p}$ might wrongly be evaluated as a $\varepsilon$-obstacle which would prevent the algorithm from finding any path passing through $n_{k,p}$, and potentially the only solution, thus breaking the completeness of the algorithm.

In this section, we derive an analytic worst case bound for the probability of failure of the MSPP-S algorithm.

We assume that there is an underlying grid and that each unit cell is either free space or obstacle.

\subsection{Definitions}
To deal with the probability of misevaluating the obstacle probability of a cell, we redefine the notion of obstacles with a threshold $\gamma$ and the event of wrongly evaluating a node as an obstacle.
\begin{defi}
Let $\varepsilon,\gamma>0$. A node $n_{k,p}$ is a $\varepsilon,\gamma$-obstacle if
\begin{equation}
\label{gobs}
\hat{V}(n_{k,p})\geq 1-2^{-d k}\varepsilon + \gamma.
\end{equation}
\end{defi}
\begin{defi}
Let $\varepsilon,\gamma>0$. $M(n_{k,p})$ is the event that the node $n_{k,p}$ is a $\varepsilon,\gamma$-obstacle and is not a $\varepsilon$-obstacle.
\end{defi}
\subsection{Bounds on $P \big(M(n_{k,p})\big)$}
\begin{prop}
Let $\varepsilon,\gamma>0$ and $n$ the number of sampled points in a node $n_{k,p}$. Then
\begin{equation}
P \big(M(n_{k,p})\big) \leq exp\left(-2\gamma^2 n\right).
\end{equation}
\end{prop}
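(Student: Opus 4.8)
The plan is to recognize $\hat V(n_{k,p})$ as an empirical mean of independent Bernoulli trials and then control its one-sided deviation with Hoeffding's inequality. Writing $n=N_{\mathrm{samples}}$, let $X_1,\dots,X_n$ be the indicators that the sampled points land on an obstacle cell. Since the $n$ points are drawn independently and uniformly inside $H(n_{k,p})$, each $X_j$ is an independent Bernoulli variable whose success probability is exactly the obstacle volume fraction $V(n_{k,p})$, by the very definition of $V(n_{k,p})$ as a volume ratio. Hence $\hat V(n_{k,p})=\frac{1}{n}\sum_{j=1}^{n}X_j$ is an unbiased estimator, $\mathbb{E}[\hat V(n_{k,p})]=V(n_{k,p})$, and the whole statement reduces to a concentration bound on $\hat V-V$. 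For brevity I would abbreviate the $\varepsilon$-obstacle threshold by $\tau=1-2^{-dk}\varepsilon$.

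Next I would unfold the event $M(n_{k,p})$, which by definition asks simultaneously that $\hat V(n_{k,p})\geq \tau+\gamma$ (an $\varepsilon,\gamma$-obstacle) and that $V(n_{k,p})<\tau$ (not an $\varepsilon$-obstacle). The second condition is deterministic for a fixed node, so I split into two cases. If $V(n_{k,p})\geq\tau$, the ``not an $\varepsilon$-obstacle'' requirement fails, $M(n_{k,p})$ is the empty event, and the bound holds trivially. If $V(n_{k,p})<\tau$, then $\tau+\gamma>V(n_{k,p})+\gamma$, so the occurrence of $\hat V\geq\tau+\gamma$ forces $\hat V\geq V+\gamma$; that is,
\[
M(n_{k,p}) \subseteq \{\hat V(n_{k,p})-V(n_{k,p})\geq\gamma\}.
\]
This inclusion is the \emph{heart} of the argument: it trades the node-dependent, unknown threshold $\tau$ for the fixed mean $V(n_{k,p})$ of the estimator, and it exhibits the worst case as $V(n_{k,p})\to\tau^-$, which is precisely why the final bound carries no residual dependence on $\tau$, $\varepsilon$, or $k$.

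Finally I would apply Hoeffding's inequality to the sum of the bounded, independent indicators $X_j\in[0,1]$. With per-term range $b_j-a_j=1$ and deviation $\gamma$ for the mean, Hoeffding gives
\[
P\big(\hat V(n_{k,p})-V(n_{k,p})\geq\gamma\big)\leq \exp\!\left(-\frac{2n^2\gamma^2}{\sum_{j=1}^{n}1}\right)=\exp\!\left(-2\gamma^2 n\right),
\]
and combining this with the inclusion above yields $P\big(M(n_{k,p})\big)\leq\exp(-2\gamma^2 n)$ in both cases. I expect the only genuine subtlety to be the modeling step rather than the analysis: one must justify that uniform sampling makes each trial Bernoulli with parameter exactly $V(n_{k,p})$ and that the trials are mutually independent, so that a one-sided Hoeffding bound is legitimate. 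Once that is granted, the event-inclusion reduction and the inequality itself are entirely routine.
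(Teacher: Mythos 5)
Your proposal is correct and follows essentially the same route as the paper: reduce the event $M(n_{k,p})$ to the deviation event $\{\hat V(n_{k,p})-V(n_{k,p})\geq\gamma\}$, model the samples as i.i.d.\ Bernoulli trials with success probability $V(n_{k,p})$, and apply the one-sided Hoeffding bound to get $\exp(-2\gamma^2 n)$. Your explicit case split on whether $V(n_{k,p})\geq\tau$ is a small tightening of the paper's slightly loose phrasing of the same inclusion, not a different argument.
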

\begin{proof}
Suppose $M(n_{k,p})$ is true. Then (\ref{obstacle_def}) and (\ref{gobs}) are verified, then from (\ref{gobs})-(\ref{obstacle_def}), we get  
\begin{equation}
\label{vhatminusv}
\hat{V}(n_{k,p})-V(n_{k,p}) \geq \gamma.
\end{equation}
Suppose that $n_{k,p}$ is composed of $N$ unit cells, including $N_o$ obstacles. Let $\mu=N_o/N$, then 
%\begin{equation}
$V(n_{k,p}) = N_o/N =\mu$.
%\end{equation}
Uniformly sampling a random point in $n_{k,p}$ is similar to uniformly picking one of the $N$ unit cells, that is, an obstacle is picked with probability $\frac{N_o}{N} =\mu$. Let $x_i$ be the random variable associated with the $i^{th}$ obstacle test. $x_i$ takes value 1 for obstacles and 0 for free space, that is $x_i$ is a Bernouilli trial. Let $n_o=\sum_{i=1}^n x_i$. $n_o$ is the number of successes in $n$ Bernoulli trials, so $n_o$ follows a binomial distribution. Using (\ref{vhatminusv}), changing variables and using Hoeffding's inequality ((\ref{11}) to (\ref{12})), we get that
\begin{eqnarray}
&& P \big(M(n_{k,p})\big) \\
&\leq & P \left(\hat{V}(n_{k,p})-V(n_{k,p}) \geq \gamma \right)\\
 &=& P\left( \frac{n_o}{n} - \mu \geq \gamma \right) \\
\label{11} &=& P \big( n_o \geq n(\gamma+\mu) \big) \\
\label{12} &\leq& exp\left(-2\gamma^2 n \right).
\end{eqnarray}
\end{proof}
\begin{prop}
Let a node $n_{k,p}$.
If $k \geq k_{max} =\lceil \frac{1}{d}\log_2 \frac{\varepsilon}{\gamma} \rceil$, then $n_{k,p}$ cannot be a $\varepsilon,\gamma$-obstacle and $M(n_{k,p})$ never happens.
\end{prop}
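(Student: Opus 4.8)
The plan is to prove the contrapositive directly from the definition of an $\varepsilon,\gamma$-obstacle, using nothing more than the fact that $\hat{V}(n_{k,p})$ is an estimated probability and hence satisfies $\hat{V}(n_{k,p}) \le 1$. First I would suppose that $n_{k,p}$ is an $\varepsilon,\gamma$-obstacle, so that (\ref{gobs}) gives $\hat{V}(n_{k,p}) \ge 1 - 2^{-dk}\varepsilon + \gamma$. Chaining this with the trivial bound $\hat{V}(n_{k,p}) \le 1$ yields $1 \ge 1 - 2^{-dk}\varepsilon + \gamma$, which collapses to the clean inequality $2^{-dk}\varepsilon \ge \gamma$.

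The second step is purely algebraic: since $\varepsilon,\gamma>0$ and the dimension $d$ is a positive integer, rearranging $2^{-dk}\varepsilon \ge \gamma$ gives $2^{dk} \le \varepsilon/\gamma$, and applying $\log_2$ and dividing by $d$ produces $k \le \frac{1}{d}\log_2\frac{\varepsilon}{\gamma}$. That is, being an $\varepsilon,\gamma$-obstacle forces the depth index $k$ to sit below the threshold $\frac{1}{d}\log_2\frac{\varepsilon}{\gamma}$. I would then bring in the hypothesis $k \ge k_{max} = \lceil \frac{1}{d}\log_2\frac{\varepsilon}{\gamma}\rceil$: because $k$ is an integer and the ceiling is by definition the least integer no smaller than its argument, $k \ge k_{max}$ is incompatible with $k \le \frac{1}{d}\log_2\frac{\varepsilon}{\gamma}$, so no such $n_{k,p}$ can be an $\varepsilon,\gamma$-obstacle. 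Since the event $M(n_{k,p})$ is, by its definition, a sub-event of the event that $n_{k,p}$ is an $\varepsilon,\gamma$-obstacle, excluding the latter rules out $M(n_{k,p})$ entirely, so $M(n_{k,p})$ never happens.

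The step I expect to demand the most care is reconciling the integer rounding in $k_{max}$ with the strictness of the inequalities. The bound $\hat{V} \le 1$ delivers only the non-strict inequality $k \le \frac{1}{d}\log_2\frac{\varepsilon}{\gamma}$, whereas a fully clean contradiction with $k \ge \lceil \frac{1}{d}\log_2\frac{\varepsilon}{\gamma}\rceil$ requires strictness in the degenerate case where $\frac{1}{d}\log_2\frac{\varepsilon}{\gamma}$ happens to be an integer (equivalently, $\varepsilon/\gamma$ is an exact power of $2^d$). In that boundary case the threshold $1 - 2^{-dk}\varepsilon + \gamma$ equals $1$ rather than exceeding it at $k=k_{max}$, so being an $\varepsilon,\gamma$-obstacle would require $\hat{V}=1$ exactly; I would close this gap either by treating it as a non-generic special case or by appealing to a strict bound $\hat{V}<1$, which makes the threshold strictly larger than $\hat{V}$ for every $k \ge k_{max}$.
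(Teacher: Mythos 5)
Your proof is correct and is essentially the paper's own argument run in the contrapositive direction: the paper derives $k \ge k_{max} \Rightarrow 2^{dk} \ge \varepsilon/\gamma \Rightarrow \gamma \ge \varepsilon 2^{-dk} \Rightarrow 1-2^{-dk}\varepsilon+\gamma \ge 1$, which is exactly your algebra reversed, with the bound $\hat{V}(n_{k,p}) \le 1$ left implicit. Your closing worry about the degenerate case where $\varepsilon/\gamma$ is an exact power of $2^d$ (so the threshold equals $1$ and $\hat{V}=1$ remains attainable under sampling) points to a non-strictness that the paper's proof shares and silently ignores, so on that point you are slightly more careful than the source.
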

\begin{proof}
\begin{eqnarray}
&& k \geq k_{max} =\left \lceil \frac{1}{d}\log_2 \frac{\varepsilon}{\gamma} \right\rceil \geq \frac{1}{d}\log_2 \frac{\varepsilon}{\gamma} \\
&\Rightarrow& 2^{dk} \geq \frac{\varepsilon}{\gamma}\\
&\Rightarrow& \gamma \geq \varepsilon 2^{-dk}\\
&\Rightarrow& 1-2^{-d k}\varepsilon + \gamma \geq 1.
\end{eqnarray}
\end{proof}
\begin{prop}
Let a node $n_{k,p}$.
If $k \leq k_{min} = \lfloor \frac{1}{d} \log_2 n  \rfloor$, then computing $V(n_{k,p})$ is less expensive than computing $\hat{V}(n_{k,p})$. So $\varepsilon$-obstacles can be used and $M(n_{k,p})$ never happens.
\end{prop}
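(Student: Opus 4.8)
The plan is to treat this as a cost-accounting statement: I would count the number of obstacle-predicate evaluations required to produce $V(n_{k,p})$ exactly versus the number required to produce the estimate $\hat V(n_{k,p})$, use the hypothesis on $k$ to compare the two counts, and then derive the obstacle-classification conclusion as a short corollary of the two obstacle definitions.

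First I would recall from the world representation that the node $n_{k,p}$ has side length $2^k$, so under the assumed underlying unit-cell grid it encloses exactly $N = 2^{dk}$ unit cells. Since each cell is either free space or obstacle, the exact value $V(n_{k,p}) = N_o/N$ is obtained by invoking $isObstacle$ once per cell, i.e.\ with $2^{dk}$ predicate evaluations. The estimate $\hat V(n_{k,p})$, by contrast, draws $n$ random points in $H(n_{k,p})$ and invokes the same predicate once per sample, for $n$ evaluations. Taking a single predicate evaluation as the unit of cost, the comparison reduces to comparing $2^{dk}$ with $n$.

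Next I would substitute the hypothesis $k \leq k_{min} = \lfloor \frac{1}{d}\log_2 n \rfloor \leq \frac{1}{d}\log_2 n$, which rearranges to $dk \leq \log_2 n$ and hence $2^{dk} \leq n$. This shows that the exact computation requires no more predicate evaluations than the sampled estimate, so computing $V(n_{k,p})$ is at most as expensive as computing $\hat V(n_{k,p})$, as claimed.

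Finally, because the exact value is then available, the algorithm may classify $n_{k,p}$ with the original definition (\ref{obstacle_def}) in place of the estimate, i.e.\ effectively set $\hat V(n_{k,p}) = V(n_{k,p})$. With this substitution the $\varepsilon,\gamma$-obstacle condition (\ref{gobs}) reads $V(n_{k,p}) \geq 1 - 2^{-dk}\varepsilon + \gamma$, which since $\gamma > 0$ strictly implies the $\varepsilon$-obstacle condition (\ref{obstacle_def}); hence any $\varepsilon,\gamma$-obstacle is necessarily an $\varepsilon$-obstacle, and the event $M(n_{k,p})$ of being the former but not the latter cannot occur. The only points requiring care are the cell count $N = 2^{dk}$ coming from the side length $2^k$, the correct direction of the floor inequality, and the strictness provided by $\gamma > 0$; there is no genuine analytic difficulty here, so I do not expect a substantive obstacle.
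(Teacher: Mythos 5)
Your proof is correct and follows essentially the same route as the paper's: both compare the cost of exact computation, $2^{dk}$ predicate evaluations (one per unit cell), against the $n$ evaluations needed for the sampled estimate, and both use $k \leq \lfloor \frac{1}{d}\log_2 n \rfloor \leq \frac{1}{d}\log_2 n$ to conclude $2^{dk} \leq n$. Your closing argument---that with the exact value in hand any $\varepsilon,\gamma$-obstacle is automatically an $\varepsilon$-obstacle (since $\gamma > 0$), so $M(n_{k,p})$ cannot occur---is a detail the paper leaves implicit, and it is a welcome addition rather than a departure.
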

\begin{proof}
The cost of computing $V(n_{k,p})$ is $2^{dk}$. The cost of computing $\hat{V}(n_{k,p})$ is $n$.
\begin{eqnarray}
&& k \leq k_{min} = \left\lfloor \frac{1}{d} \log_2 n  \right\rfloor \leq \frac{1}{d} \log_2 n \\
&\Rightarrow& 2^{dk} \leq n.
\end{eqnarray}
\end{proof}

\subsection{Probability of failure of MSPP-S}
We assume here the MSPP-S algorithm uses $\varepsilon,\gamma$-obstacles when $k>k_{min}$ and $\varepsilon$-obstacles otherwise.
We also assume that the algorithm evaluates $\hat{V}(n_{k,p})$ at most once for each node, it is evaluated when the value is needed and then the value is kept in memory.
\begin{prop}
Given $\varepsilon,\gamma,n>0$, an upperbound on the probability of failure of MSPP-S is given by
\begin{equation}
P(failure) \leq 1-\left(1-exp(-2\gamma^2 n)\right)^{nb_{occ}},
\end{equation}
where
\begin{equation}
nb_{occ} =\frac{2^{d(\ell-k_{min})}-2^{d(\ell-k_{max}+1)}}{2^d-1}.
\end{equation}
\end{prop}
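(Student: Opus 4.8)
The plan is to express the failure event as a union of per-node misevaluation events and then bound its probability through the complement. MSPP-S fails precisely when some node that is not a true $\varepsilon$-obstacle is wrongly rejected as an $\varepsilon,\gamma$-obstacle, i.e.\ when $M(n_{k,p})$ occurs for at least one node $n_{k,p}$ that the algorithm evaluates by sampling. Writing $\overline{M(n_{k,p})}$ for the complementary (correct-evaluation) event, I would begin from
$$P(\mathrm{failure}) = P\Big(\bigcup_{k,p} M(n_{k,p})\Big) = 1 - P\Big(\bigcap_{k,p} \overline{M(n_{k,p})}\Big),$$
so that an upper bound on $P(\mathrm{failure})$ reduces to a lower bound on the probability that every relevant node is evaluated correctly.

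The second step is to identify which nodes are \emph{relevant}, using the two structural propositions already established. By the proposition stating that $M(n_{k,p})$ never occurs when $k \geq k_{max}$, no node at or above that level can be misevaluated; by the proposition stating that exact $\varepsilon$-obstacles are used (so that $M$ never occurs) when $k \leq k_{min}$, no node at or below that level contributes either. Together with the standing assumption that $\varepsilon,\gamma$-obstacles are used exactly on the range $k > k_{min}$, the only nodes that can trigger failure are those with $k_{min} < k < k_{max}$, that is $k_{min}+1 \leq k \leq k_{max}-1$.

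The third step is the counting of these nodes. Since the world has side $2^{\ell}$ and a node at size-level $k$ has side $2^{k}$, the full tree contains exactly $2^{d(\ell-k)}$ nodes at each level $k$. Summing over the relevant levels gives the worst-case count
$$nb_{occ} = \sum_{k=k_{min}+1}^{k_{max}-1} 2^{d(\ell-k)} = \sum_{j=\ell-k_{max}+1}^{\ell-k_{min}-1} (2^d)^{j},$$
a finite geometric series with ratio $2^d$; evaluating it yields precisely $\big(2^{d(\ell-k_{min})} - 2^{d(\ell-k_{max}+1)}\big)/(2^d-1)$, which is the stated expression for $nb_{occ}$.

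Finally, I would assemble the pieces. Using the per-node bound $P\big(M(n_{k,p})\big) \leq \exp(-2\gamma^2 n)$, each correct-evaluation event satisfies $P\big(\overline{M(n_{k,p})}\big) \geq 1 - \exp(-2\gamma^2 n)$. The key assumption invoked here is that the sampling carried out at distinct nodes is independent (each node being evaluated at most once, as assumed), so that $P\big(\bigcap \overline{M}\big) = \prod P\big(\overline{M}\big) \geq \big(1-\exp(-2\gamma^2 n)\big)^{nb_{occ}}$ in the worst case where all $nb_{occ}$ nodes are evaluated (fewer evaluations only enlarge the product, since each factor lies in $[0,1]$). Substituting into the complement yields the claimed bound. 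The main obstacle is precisely this independence justification: a naive union bound would give only $P(\mathrm{failure}) \leq nb_{occ}\,\exp(-2\gamma^2 n)$, which is looser than the product form (indeed $1-(1-x)^{N} \leq Nx$ for $x\in[0,1]$), so obtaining the tighter stated inequality genuinely depends on treating the node evaluations as independent worst-case trials.
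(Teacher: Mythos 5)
Your proposal is correct and follows essentially the same route as the paper's proof: restrict the misevaluation events $M(n_{k,p})$ to the levels $k_{min} < k < k_{max}$ using the two preceding propositions, count those nodes as a geometric series to get $nb_{occ}$, and invoke independence of the per-node sampling to obtain the product form $\left(1-\exp(-2\gamma^2 n)\right)^{nb_{occ}}$ rather than a looser union bound. One small correction: failure is not \emph{equivalent} to some $M(n_{k,p})$ occurring (the paper itself remarks that a misevaluated node off the solution path is harmless), so your opening identity should be the inclusion $P(\mathrm{failure}) \leq P\bigl(\bigcup_{k,p} M(n_{k,p})\bigr)$ rather than an equality --- which is all the argument needs, and is exactly the direction the paper uses via $P(\mathrm{success}) \geq P\bigl(M(n_{k,p})\text{ never happens}\bigr)$.
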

\begin{proof}
$M(n_{k,p})$ can happen for every node such that $k_{min}< k < k_{max}$, that is, the maximum number of occurences $nb_{occ}$ of $M(n_{k,p})$ is the number of nodes verifying $k_{min}< k < k_{max}$,
\begin{eqnarray}
nb_{occ} &=& \sum_{k_{min}< k < k_{max}} 2^{d(\ell-k)} \\
&=& \frac{2^{d(\ell-k_{min})}-2^{d(\ell-k_{max}+1)}}{2^d-1}.
\end{eqnarray}
If $M(n_{k,p})$ never happens, the algorithm will never discard potential paths, and stay complete, that is,
\begin{eqnarray}
P(success)& \geq & P\big(M(n_{k,p})\text{ never happens}\big) \\
& = & P\big(\neg M(n_{k,p})\big)^{nb_{occ}}.
\end{eqnarray}
Hence,
\begin{equation}
P(failure) \leq 1-\left(1-exp(-2\gamma^2 n)\right)^{nb_{occ}}.
\end{equation}
\end{proof}

\subsection{Upperbound when multiple independent solutions exist}
Suppose there exists $Z$ independant solutions to the planning problem, that is, the search space can de partitionned in $Z$ regions in which there exists at least one solution. Suppose for simplificity that all regions have the same size, $1/Z^{th}$ of the original space.\\
The probability of failing in one region is then bounded by 
\begin{equation}
P(failure\text{ in one region}) \leq 1-\left(1-exp(-2\gamma^2 n)\right)^{nb_{occ}/Z}.
\end{equation}
The probability of the algorithm failing is smaller that the algorithm failing in each regions simultaneously since there might exist solutions crossing regions, so
\begin{equation}
P(failure) \leq \left( 1-\left(1-exp(-2\gamma^2 n)\right)^{nb_{occ}/Z}\right)^Z.
\end{equation}

%\subsection{Numerical Example}
%Consider the following set of parameters: $\ell=5$, $d=1$, $\varepsilon=90\%$, $\gamma=0.35\%$ and suppose there exists $Z=2$ independent solutions to the problem.
% Note that $d=1$ makes the problem trivial, but these parameters give a good visulation of the different aspects of the problem. 
Fig~\ref{fig:prob_bound} shows the upperbound on the probability of failure as a function of the number of sampled points $n$ for a given set of parameters of the algorithm. For small $n$, the probability is very close to 1 since we have a very poor precision in the estimate of the obstacle probabilities. As $n$ grows, the probabilty for every single cell gets better hence the probability of failure decreases. The grow of $n$ also increases the value of $k_{min}$, thus creating drops in the maximum number of occurences of $M\left( n_{k,p} \right)$ and in the probability of failure. As $k_{min}$ reaches $k_{max}-1$ the maximum number of occurences of $M\left( n_{k,p} \right)$ goes to 0 and the probability of failure of the algorithm then becomes 0.

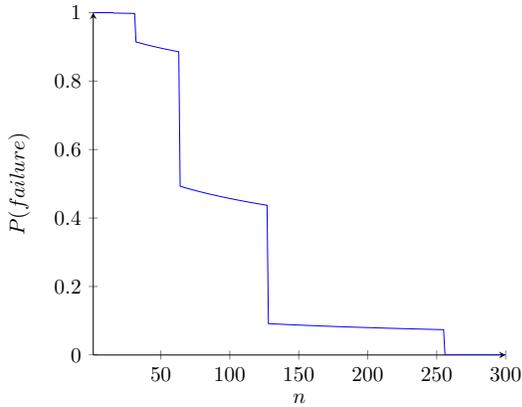
\begin{figure}
\centering
\begin{tikzpicture}[scale=0.8]
\begin{axis}[
    axis lines = left,
    xlabel = {$n$},
    ylabel = {$P(failure)$},
]
\addplot[
    color=blue,
    ]
    coordinates {
    (1,1)(2,1)(3,1)(4,1)(5,1)(6,1)(7,1)(8,1)(9,1)(10,1)(11,1)(12,1)(13,1)(14,1)(15,1)(16,0.99872)(17,0.99865)(18,0.99857)(19,0.9985)(20,0.99842)(21,0.99835)(22,0.99828)(23,0.9982)(24,0.99813)(25,0.99806)(26,0.99798)(27,0.99791)(28,0.99784)(29,0.99777)(30,0.99769)(31,0.99762)(32,0.91437)(33,0.91324)(34,0.91213)(35,0.91103)(36,0.90995)(37,0.90889)(38,0.90785)(39,0.90682)(40,0.90581)(41,0.90482)(42,0.90383)(43,0.90286)(44,0.90191)(45,0.90096)(46,0.90003)(47,0.89911)(48,0.89821)(49,0.89731)(50,0.89642)(51,0.89555)(52,0.89468)(53,0.89383)(54,0.89298)(55,0.89214)(56,0.89131)(57,0.89049)(58,0.88968)(59,0.88888)(60,0.88808)(61,0.8873)(62,0.88652)(63,0.88574)(64,0.49295)(65,0.49174)(66,0.49054)(67,0.48935)(68,0.48818)(69,0.48703)(70,0.48589)(71,0.48477)(72,0.48366)(73,0.48257)(74,0.48149)(75,0.48042)(76,0.47936)(77,0.47832)(78,0.47728)(79,0.47626)(80,0.47526)(81,0.47426)(82,0.47327)(83,0.4723)(84,0.47133)(85,0.47038)(86,0.46943)(87,0.4685)(88,0.46757)(89,0.46666)(90,0.46575)(91,0.46485)(92,0.46396)(93,0.46308)(94,0.46221)(95,0.46135)(96,0.4605)(97,0.45965)(98,0.45881)(99,0.45798)(100,0.45715)(101,0.45634)(102,0.45553)(103,0.45473)(104,0.45393)(105,0.45314)(106,0.45236)(107,0.45159)(108,0.45082)(109,0.45006)(110,0.4493)(111,0.44855)(112,0.44781)(113,0.44707)(114,0.44634)(115,0.44561)(116,0.44489)(117,0.44418)(118,0.44347)(119,0.44277)(120,0.44207)(121,0.44138)(122,0.44069)(123,0.44001)(124,0.43933)(125,0.43866)(126,0.43799)(127,0.43733)(128,0.091568)(129,0.091363)(130,0.09116)(131,0.090958)(132,0.090758)(133,0.09056)(134,0.090363)(135,0.090168)(136,0.089975)(137,0.089783)(138,0.089593)(139,0.089404)(140,0.089216)(141,0.08903)(142,0.088846)(143,0.088663)(144,0.088481)(145,0.0883)(146,0.088121)(147,0.087944)(148,0.087767)(149,0.087592)(150,0.087418)(151,0.087246)(152,0.087074)(153,0.086904)(154,0.086735)(155,0.086568)(156,0.086401)(157,0.086236)(158,0.086071)(159,0.085908)(160,0.085746)(161,0.085585)(162,0.085425)(163,0.085267)(164,0.085109)(165,0.084952)(166,0.084797)(167,0.084642)(168,0.084489)(169,0.084336)(170,0.084184)(171,0.084034)(172,0.083884)(173,0.083735)(174,0.083587)(175,0.083441)(176,0.083295)(177,0.08315)(178,0.083005)(179,0.082862)(180,0.08272)(181,0.082578)(182,0.082438)(183,0.082298)(184,0.082159)(185,0.08202)(186,0.081883)(187,0.081747)(188,0.081611)(189,0.081476)(190,0.081342)(191,0.081208)(192,0.081076)(193,0.080944)(194,0.080812)(195,0.080682)(196,0.080552)(197,0.080423)(198,0.080295)(199,0.080168)(200,0.080041)(201,0.079915)(202,0.079789)(203,0.079664)(204,0.07954)(205,0.079417)(206,0.079294)(207,0.079172)(208,0.079051)(209,0.07893)(210,0.078809)(211,0.07869)(212,0.078571)(213,0.078453)(214,0.078335)(215,0.078218)(216,0.078101)(217,0.077985)(218,0.07787)(219,0.077755)(220,0.077641)(221,0.077527)(222,0.077414)(223,0.077302)(224,0.07719)(225,0.077079)(226,0.076968)(227,0.076857)(228,0.076748)(229,0.076638)(230,0.07653)(231,0.076422)(232,0.076314)(233,0.076207)(234,0.0761)(235,0.075994)(236,0.075888)(237,0.075783)(238,0.075678)(239,0.075574)(240,0.075471)(241,0.075367)(242,0.075265)(243,0.075162)(244,0.07506)(245,0.074959)(246,0.074858)(247,0.074758)(248,0.074658)(249,0.074558)(250,0.074459)(251,0.07436)(252,0.074262)(253,0.074164)(254,0.074067)(255,0.07397)(256,0)(257,0)(258,0)(259,0)(260,0)(261,0)(262,0)(263,0)(264,0)(265,0)(266,0)(267,0)(268,0)(269,0)(270,0)(271,0)(272,0)(273,0)(274,0)(275,0)(276,0)(277,0)(278,0)(279,0)(280,0)(281,0)(282,0)(283,0)(284,0)(285,0)(286,0)(287,0)(288,0)(289,0)(290,0)(291,0)(292,0)(293,0)(294,0)(295,0)(296,0)(297,0)(298,0)(299,0)(300,0)
    };
 
\end{axis}
\end{tikzpicture}
\caption{Bound on the probability of failure with the parameters $\ell=5$, $d=1$, $\varepsilon=90\%$, $\gamma=0.35\%$ and $Z=2$.}
\label{fig:prob_bound}
\end{figure}

The upperbounds derived are very conservative in the sense that:
\begin{itemize}
\item in most cases, only part of the nodes $n_{k,p}$ with $k_{min}< k < k_{max}$ are evaluated;
\item if $M(n_{k,p})$ happens for a node that is not part of the solution, the solution will still be found by the algorithm;
\item if multiple solutions exist but are not independent, the probability is still largely reduced, but not exponentially;
\item in typical environement, large areas of free space exists, thus multiplying the number of possible solutions and largely reducing the probability of failure.
\end{itemize}  
In practise, failure of the MSPP-S algorithm has not been observed.

\section{Results}

\subsection{Comparison in Random Environemnts}

In this section, we compare the original MSPP algorithm against the proposed extensions and also against the A$^*$ run on a uniform grid.
MSPP-FN refers to the variant with the new neighbor test and MSPP-S refer to the variant using sampling (it also uses fast neighbors).
Obstacle maps were randomly generated and then used to solve path-planning problems via these four algorithms.
The problem was solved for dimensions ranging from 2 to 5 with a tree depth of 5, that is, for search spaces ranging from $2^{2 \times 5}=1024$ to $2^{5\times 5}\simeq 3\times 10^7$.

\begin{figure}[ht]
\centering
\begin{tikzpicture}[scale=0.8]
    \begin{axis}[
        width  = 0.45*\textwidth,
        height = 8cm,
        major x tick style = transparent,
        ybar=2*\pgflinewidth,
        bar width=9pt,
        ymajorgrids = true,
        ylabel = {Execution Time (in ms)},
        xlabel = {Dimension of the search space},
        ymode = log,
        symbolic x coords={2,3,4,5},
        xtick = data,
        scaled y ticks = false,
        enlarge x limits=0.25,
        ymin=1,
        legend cell align=left,
        legend style={
                at={(0.02,0.98)},
                anchor=north west,
                column sep=1ex
        }
    ]
        \addplot[style={bblue,fill=bblue,mark=none}]
            coordinates {(2, 1.3052) (3,55.8653) (4,11039.4) (5,2907080)};

        \addplot[style={rred,fill=rred,mark=none}]
             coordinates {(2,1.2961) (3,27.9657) (4,586.148) (5,11101.3)};

        \addplot[style={ggreen,fill=ggreen,mark=none}]
             coordinates {(2,1.2991) (3,22.437) (4,341.891) (5,5200.94)};

        \legend{A$^*$,MSPP,MSPP-FN}
    \end{axis}
\end{tikzpicture}
\caption{Comparison of the execution time of the A$^*$, MSPP and MSPP-FN algorithms. The results are shown in logarithmic scale.}
\label{comparison}
\end{figure}
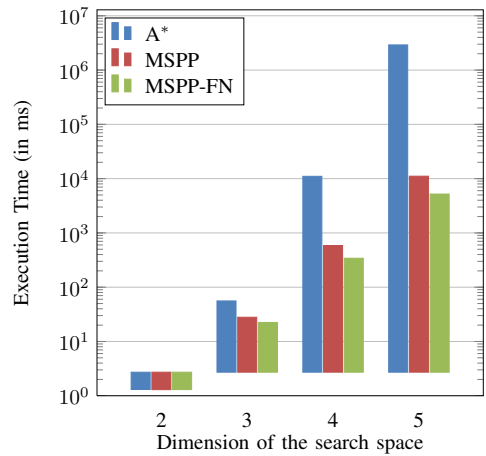

Figure~\ref{comparison} shows the average execution time (in log scale) of the MSPP, the MSPP-FN and the A$^*$ algorithms on the randomly generated maps.
In this figure, the time to create the map is not taken into account in order to compare the pure performance of the planning algorithms, that is, it is just the time to find a path on an already existing map.

For the smaller search spaces, we see very few differences between all the algorithms, as expected.
As the dimension and the size of the search space grow however, the MSPP algorithm becomes much faster than the A$^*$, by more than two orders of magnitude in dimension 5. By the same token,
the MSPP-FN algorithm is even faster (by 50\%) over the baseline MSPP algorithm in dimension 5.

In Figure~\ref{comparison2}, the cost of creating the map is taken into account. This is done in order
to compare the results of using the MSPP-S algorithm.
\begin{figure}[ht]
\centering
\begin{tikzpicture}[scale=0.8]
    \begin{axis}[
        width  = 0.45*\textwidth,
        height = 8cm,
        major x tick style = transparent,
        ybar=2*\pgflinewidth,
        bar width=9pt,
        ymajorgrids = true,
        ylabel = {Execution Time (in ms)},
        xlabel = {Dimension of the search space},
        ymode = log,
        symbolic x coords={2,3,4,5},
        xtick = data,
        scaled y ticks = false,
        enlarge x limits=0.25,
        ymin=1,
        legend cell align=left,
        legend style={
                at={(0.02,0.98)},
                anchor=north west,
                column sep=1ex
        }
    ]
        \addplot[style={bblue,fill=bblue,mark=none}]
            coordinates {(2, 16.0344) (3,259.53) (4,11250) (5,6043900)};

        \addplot[style={rred,fill=rred,mark=none}]
             coordinates {(2,16.84) (3,246.0137) (4,3937.6) (5,61736)};

        \addplot[style={ggreen,fill=ggreen,mark=none}]
             coordinates {(2,75.8) (3,70.2365) (4,705.3750) (5,3224.9)};

        \legend{Map + A$^*$,Map + MSPP-FN ,MSPP-S}
    \end{axis}
\end{tikzpicture}
\caption{Comparison of the time to construct the map and run the A$^*$ or MSPP-FN algorithm versus the time to run the MSPP-S algorithm for which a map does not need to be computed. The results are shown in logarithmic scale.}
\label{comparison2}
\end{figure}
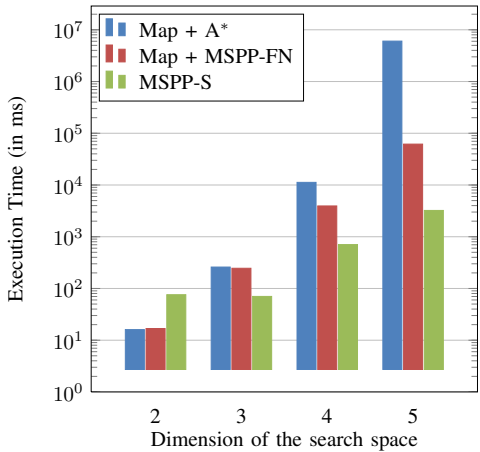
Three algorithms are compared here, the A$^*$ algorithm with the construction of the graph, the MSPP algorithm with the construction of the multi-scale map and the MSPP-S algorithm.
Similarly to the previous case, on a small search space, there is little or no improvement.
As the problem dimension increases, however, the improvement gets much better.
The MSPP-S algorithm is three orders of magnitude faster than creating a map and using the A$^*$ algorithm, and more than ten times faster than constructing a multi-scale map and using the original MSPP algorithm.

\subsection{Application to a Robot Arm}

\begin{figure}[ht]
\centering
\begin{subfigure}{0.45\linewidth}
\includegraphics[width=0.99\textwidth]{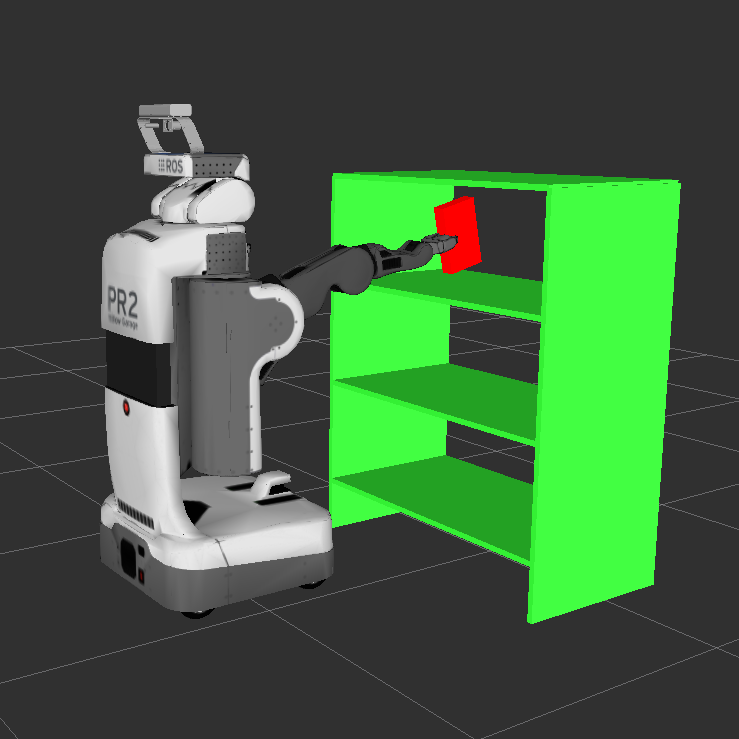}
\end{subfigure}%
\hspace*{2ex}
\begin{subfigure}{0.45\linewidth}
\includegraphics[width=0.99\textwidth]{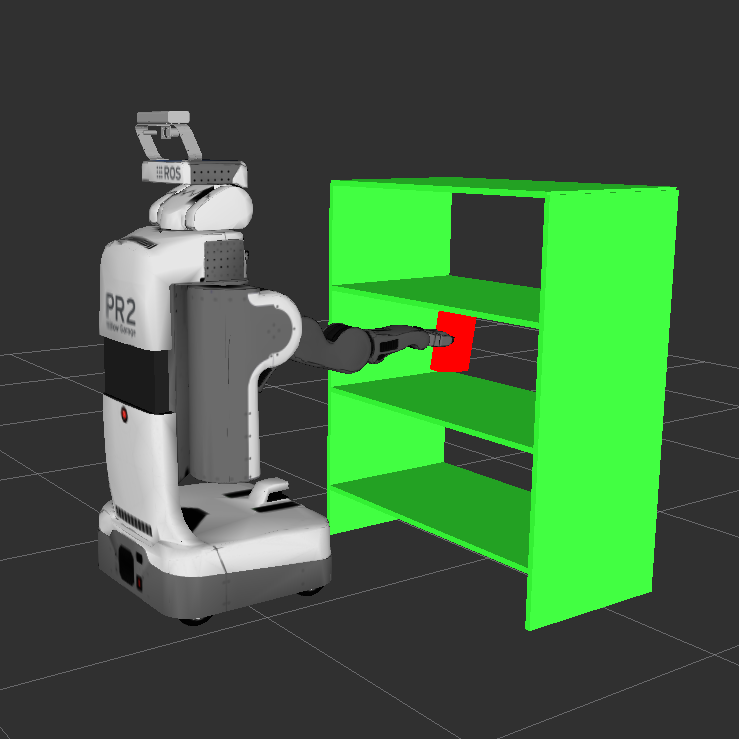}
\end{subfigure}
\caption{Initial and final pose of the planning problem for the PR2 arm.}
\label{pr2}
\end{figure}

The planning algorithm was used to plan a trajectory for an arm of the PR2 robot. Planning was done in the configuration space using 4 joints of the arm. Figure~\ref{pr2} shows the initial configuration and the desired final configuration; the robot needs to move a book from the top shelf to the second shelf. The depth of the tree was set to 5, creating a search space of size $2^{4\times 5}\simeq 3\times 10^7$.

The path-planning problem was solved three times to compare the variants of the algorithm: First, the multi-scale map was built by exploring the entire search-space and the MSPP algorithm was used to find the solution.
All algorithms were performed on the same desktop computer running Ubuntu Linux.
Building the map was the most time-consuming process. 
It takes on average 4 minutes and 52 seconds and solving the path-planning problem takes on average 47 seconds.
Using the MSPP-FN algorithm on the same map, the problem was solved in 4 seconds on average.
%The MSPP-S algorithm improved
%
%was run and a solution was found in 4 minutes and 46 seconds on average.
%In this case, we did not see a significant improvement from the use of sampling
% but the results are still better than creating the full multi-scale map.

\section{Conclusions}

In this paper, we have introduced several modifications and extensions to the original MSPP algorithm, first presented in \cite{hauer2015multi}, to increase its computational efficiency.
The resulting multi-scale path-planning algorithms, called MSPP-FN and MSPP-S offer several non-trivial improvements over the previous MSPP algorithm.
First, the complexity of each iteration of the algorithm is reduced by changing the manner by which the adjacency relationships in the reduced graph are computed at each iteration.
Second, the range of applications of the algorithm has been widened,
by allowing the use of an obstacle predicate rather than accurate prior knowledge of a full information multi-scale map.
This extension results in much fewer requirements in terms of memory allocation and theoretical bounds on the probability of failure were derived.
Third, reordering the operations performed by the algorithm allows one to minimize computations by avoiding
information that is not needed during execution.
We have compared the original MSPP algorithm to the proposed MSPP-FN and MSPP-S algorithms and found runtime improvements by over 50\%.
Both algorithms outperform A$^*$ by more than two orders of magnitude.

\bibliographystyle{abbrv}
\bibliography{flo}

\end{document}